\documentclass[letterpaper, 10 pt, conference]{ieeeconf}
\IEEEoverridecommandlockouts

\usepackage{geometry}
\newgeometry{top=0.75in, bottom=0.75in, right=0.75in, left=0.75in}

\usepackage{amsmath,amssymb,amsfonts}

\usepackage{amsthm}
\usepackage{cite}
\usepackage{graphicx}
\usepackage{textcomp}
\usepackage{xcolor}
\usepackage{booktabs}
\usepackage{multirow}
\usepackage{rotating}
\usepackage{bm}
\usepackage{tikz}
\usepackage{subcaption}
\usepackage{balance}

\usepackage{algorithm}
\usepackage{algorithmic}

\usepackage[hidelinks]{hyperref}
\hypersetup{
  pdftitle={Certified Set Convergence for Piecewise Affine Systems via Neural Lyapunov Functions},
  pdfauthor={Yanliang Huang, Peng Xie, Zhen Zhang, Wenyuan Wu, Zhuoqi Zeng, Amr Alanwar},
  pdfsubject={PWA systems, set convergence, neural Lyapunov, zonotope verification},
  pdfkeywords={PWA systems, zonotope, Deep Sets, Hamilton-Jacobi, Lyapunov}
}
\usepackage{url}

\newtheoremstyle{cdcplain}{}{}{}{}{\itshape}{.}{ }{}
\newtheoremstyle{cdcdefn}{}{}{}{}{\itshape}{.}{ }{}

\theoremstyle{cdcplain}
\newtheorem{theorem}{Theorem}

\newtheorem{proposition}{Proposition}

\theoremstyle{cdcdefn}
\newtheorem{definition}{Definition}
\newtheorem{remark}{Remark}
\newtheorem{assumption}{Assumption}

\newcommand{\R}{\mathbb{R}}

\newcommand{\X}{\mathcal{X}}
\newcommand{\Ucontrol}{\mathbb{U}}

\DeclareMathOperator*{\argmin}{arg\,min}
\DeclareMathOperator{\diam}{diam}
\DeclareMathOperator{\dist}{dist}
\newcommand{\Tref}{\mathcal{T}_{\mathrm{ref}}}
\newcommand{\Wset}{\mathcal{W}}
\newcommand{\zono}[1]{\langle #1 \rangle}

\renewcommand{\QED}{\QEDopen}

\graphicspath{{figures/}}

\title{\LARGE \bf
Certified Set Convergence for Piecewise Affine Systems\\
via Neural Lyapunov Functions}

\author{Yanliang Huang$^{1,*}$, Peng Xie$^{1,*}$, Zhen Zhang$^{1}$, Wenyuan Wu$^{1}$, Zhuoqi Zeng$^{2}$, Amr Alanwar$^{1}$%
\thanks{$^{*}$Equal contribution.}%
\thanks{$^{1}$School of Computation, Information and Technology, Technical University of Munich, Munich, Germany.
{\tt\small \{yanliang.huang, p.xie, zhenzhang.zhang, wenyuan.wu, alanwar\}@tum.de}}%
\thanks{$^{2}$School of Engineering, Hainan Bielefeld University of Applied Sciences, Hainan, China.
{\tt\small zhuoqi.zeng@hainan-biuh.edu.cn}}%
}

\begin{document}

\maketitle

\begin{abstract}
  Safety-critical control of piecewise affine (PWA) systems under bounded additive disturbances requires guarantees not for individual states but for entire state sets simultaneously: a single control action must steer every state in the set toward a target, even as sets crossing mode boundaries split and evolve under distinct affine dynamics. Certifying such set convergence via neural Lyapunov functions couples the Lipschitz constants of the value function and the policy, yet certified bounds for expressive networks exceed true values by orders of magnitude, creating a certification barrier.
  We resolve this through a three-stage pipeline that decouples verification from the policy. A value function from Hamilton-Jacobi backward reachability, trained via reinforcement learning, is the Lyapunov candidate. A permutation-invariant Deep Sets controller, distilled via regret minimization, produces a common action. Verification propagates zonotopes through the value network, yielding verified Lyapunov upper bounds over entire sets without bounding the policy Lipschitz constant.
  On four benchmarks up to dimension six, including systems with per-mode operator norms exceeding unity, the framework certifies set convergence with positive margin on every system. A spectrally constrained local certificate completes the terminal guarantee, and the set-actor is the only tested method to achieve full strict set containment, at constant-time online cost.
\end{abstract}

\section{Introduction}

\begin{figure*}[t]
  \centering
  \includegraphics[width=0.8\textwidth]{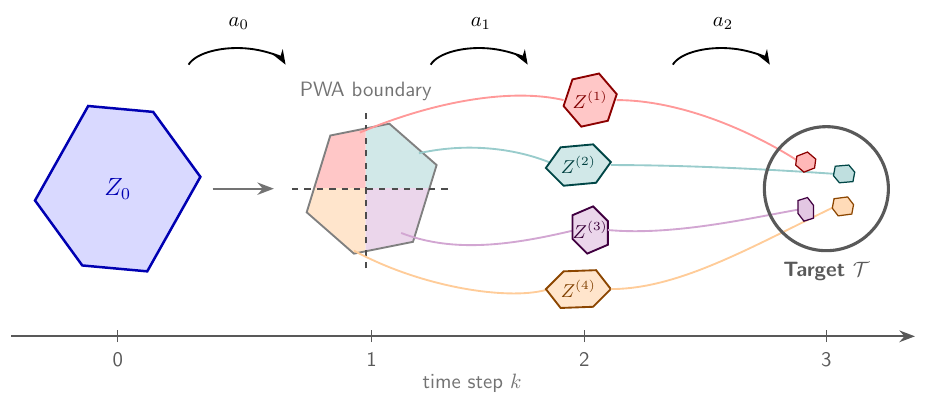}
  \caption{Certified set convergence for PWA systems over
  discrete time steps. An initial zonotope $Z_0$ evolves
  under action $a_0$ and, upon reaching mode boundaries
  at step $k{=}1$, splits into fragments that each evolve
  under distinct affine dynamics, distinguished by color.
  At every step, the set-actor produces a single control
  action $a_k$ that steers all fragments toward the target
  simultaneously. The fragments finally converge into the 
  target $\mathcal{T}$, verified via zonotope propagation
  through the learned value function
  (Theorem~\ref{thm:traj_cert}).}
  \label{fig:pipeline}
\end{figure*}

Piecewise affine (PWA) systems~\cite{liberzon2003switching, sontag1981nonlinear} model hybrid dynamics where continuous state evolution switches across polyhedral regions. Safety-critical deployment demands controller guarantees not at individual states but over entire state sets under bounded uncertainty: a single control action must steer every state in the set toward a target simultaneously. For PWA systems, state sets crossing mode boundaries must be split and propagated under distinct affine dynamics, compounding the certification challenge.

Several lines of work address related aspects of this problem. Sch\"urmann and Althoff~\cite{schurmann2017optimal, schurmann2020optimizing} optimize directly over zonotope reachable sets, certifying constraint satisfaction by geometric containment of the over-approximated reachable set within the feasible region, without requiring a Lyapunov function. Neural Lyapunov methods~\cite{chang2019neural, dai2021lyapunov} learn Lyapunov functions jointly with controllers, verified via counterexample-guided search, while Chen et al.~\cite{chen2020learning} extend to PWA systems with piecewise quadratic candidates. These provide exact pointwise certificates, but verification complexity scales exponentially with network size, and the certificates do not extend to state sets. Zonotope propagation~\cite{singh2018fast} bounds neural network outputs layer-by-layer and has been extended to polynomial zonotopes~\cite{kochdumper2023polyzonotope}, constrained polynomial zonotopes for data-driven systems~\cite{zhang2025data}, and hybrid zonotopes~\cite{bird2023hybrid}, but these verify safety or constraint satisfaction rather than set convergence under a common control action. Solanki et al.~\cite{solanki2026formalizing, solanki2026certifying} certify reinforcement-learning-trained Hamilton-Jacobi (HJ) values for reachable-set membership but not set convergence. Wendl et al.~\cite{wendl2024setbasedrl} integrate zonotope propagation into the training loss to train agents verifiable via reachability analysis, but their verification propagates sets through the policy network, so certificate tightness inherits the policy's Lipschitz sensitivity.

Extending these results to certified set convergence via Lyapunov conditions faces a structural barrier: it requires that a common action decrease a Lyapunov function for every state in the set, not just the center. Since a single action cannot be simultaneously optimal for all states, the gap to each state's pointwise optimum couples the Lipschitz constants of the value function and the policy. For expressive ReLU networks, certified Lipschitz bounds exceed true values by orders of magnitude, and this gap grows exponentially in depth, creating an expressivity--certifiability trade-off~\cite{jovanovic2023ibp}.

This paper resolves the trade-off by decoupling the verification from the policy entirely, via the three-stage pipeline shown in Fig.~\ref{fig:pipeline}.

The contributions of this paper are:
\begin{enumerate}
  \item A set-based control pipeline comprising mode-boundary
    zonotope decomposition~\cite{singh2018fast} for affine propagation,
    a Deep Sets controller~\cite{zaheer2017deep} that processes
    variable-size zonotope fragments as unordered token sets,
    and sampling-based regret distillation from a pointwise
    HJ actor~\cite{solanki2026formalizing, bansal2021deepreach, fisac2015reach}.
  \item Trajectory-certified set convergence via
    Theorem~\ref{thm:traj_cert}: zonotope propagation through
     the learned value function provides a verified certificate that
    decouples the verification from the policy entirely.
    Unlike Lipschitz-based certificates~\cite{dawson2023neural,chang2019neural}
    and propagation-coupled training~\cite{wendl2024setbasedrl},
    no bound on the policy Lipschitz constant is required.
  \item A diagnostic condition
    (Propositions~\ref{prop:tradeoff}--\ref{prop:diagnostic})
    that quantifies the expressivity--certifiability
    trade-off~\cite{jovanovic2023ibp}, together with a terminal
    certificate (Theorem~\ref{thm:local_cert}) that closes this
    gap within a near-target core using spectrally constrained
    local networks and completes the certification hierarchy.
\end{enumerate}

The remainder of this paper is organized as follows.
Section~\ref{sec:problem} formulates the PWA system dynamics,
zonotope representation, and the set convergence objective.
Section~\ref{sec:method} presents the three-stage pipeline.
Section~\ref{sec:experiments} evaluates the framework on four
benchmarks.
Section~\ref{sec:conclusions} summarizes the contributions
and discusses limitations and future directions.

\section{Problem Formulation}\label{sec:problem}

\subsection{PWA System Dynamics}

Consider a discrete-time PWA system on $\R^n$ partitioned into $M$ polyhedral regions $\{\X_i\}_{i=1}^M$:
\begin{equation}\label{eq:pwa}
  x_{k+1} = A_i\, x_k + B_{u,i}\, u_k + w_k,
  \quad x_k \in \X_i,
\end{equation}
where $x_k \in \R^n$ is the state, $u_k \in \Ucontrol \subset \R^m$ is the control input with $\Ucontrol$ compact, $A_i \in \R^{n \times n}$ is the mode-dependent state matrix, $B_{u,i} \in \R^{n \times m}$ is the mode-dependent input matrix, and $w_k \in \Wset$ is an additive disturbance with $\Wset = \zono{0, \sigma_w I_n}$ compact. We write $f(x,u)$ for the nominal piecewise map defined by~\eqref{eq:pwa} with $w_k = 0$. Let $\Omega \subset \R^n$ be a compact operating domain. The regions form a polyhedral partition of $\Omega$: $\bigcup_i \X_i \supseteq \Omega$ and $\mathrm{int}(\X_i) \cap \mathrm{int}(\X_j) = \emptyset$ for $i \neq j$.

\subsection{Zonotope Representation}

A zonotope $Z \subset \R^n$ is defined as
\begin{equation}\label{eq:zono}
  Z = \bigl\{c + G\alpha : \alpha \in [-1,1]^p\bigr\},
\end{equation}
where $c \in \R^n$ is the center and $G \in \R^{n \times p}$ is the generator matrix, written compactly as $Z = \zono{c, G}$. Zonotopes are closed under affine maps $AZ + b$, making them well-suited for PWA reachability.

\subsection{Control Objective}

Given a design reference $\Tref = \{x : \|x - x_t\|_2 \leq r_t\}$, the objective is to find a set-based controller $\pi_{\mathrm{set}}$ such that for a given initial zonotope $Z_0$, the reachable set converges to a compact certification target $\mathcal{T}$ within finite time $K$:
\begin{equation}\label{eq:objective}
  \bigcup\nolimits_{j} Z_K^{(j)} \subseteq \mathcal{T},
\end{equation}
where superscript $(j)$ indexes zonotope fragments arising from mode-boundary decomposition, defined as follows.

\begin{definition}[Reachable Set Propagation]\label{def:dynamics}
  The reachable set at step $k$ is a zonotope family $\mathbb{Z}_k = \{Z^{(j)}\}_{j=1}^{N_k}$ with physical extent $\mathcal{U}(\mathbb{Z}_k) = \bigcup_{j=1}^{N_k} Z^{(j)}$. Given a control action $u_k \in \Ucontrol$, the one-step reachable set is
  \begin{subequations}\label{eq:set_dynamics}
  \begin{align}
    \mathbb{Z}_{k,i} &= \mathbb{Z}_k \cap \X_i, \quad i = 1, \ldots, M, \label{eq:sd_intersect}\\
    \mathbb{Z}_{k+1} &= \bigcup_{i=1}^{M} \bigl(A_i \mathbb{Z}_{k,i} + B_{u,i}\,u_k \oplus \Wset\bigr), \label{eq:sd_propagate}
  \end{align}
  \end{subequations}
  where $\mathbb{Z}_{k,i}$ collects the fragments of $\mathbb{Z}_k$ contained in mode region $\X_i$.
\end{definition}

\section{Methodology}\label{sec:method}

The pipeline has three stages: Section~\ref{sec:hj_lyap} trains a deep HJ value function $V^*$ as a Lyapunov candidate, Section~\ref{sec:set_distill} distills a set-based controller from the pointwise actor, and Section~\ref{sec:theory} provides the theoretical guarantees at three certification levels.

\subsection{Deep HJ as Lyapunov Guidance}\label{sec:hj_lyap}

The travel-cost formulation~\cite{solanki2026formalizing} encodes reachability through a calibrated reward: identically zero outside the target region and strictly negative inside. For a circular target $\Tref$ with center $x_t$ and radius $r_t$, this calibration takes the form
\begin{equation}\label{eq:reward}
  r(x) = \begin{cases}
    -(r_t - \|x - x_t\|_2), & \|x - x_t\|_2 < r_t, \\
    0,                      & \text{otherwise}.
  \end{cases}
\end{equation}
Since every reward term satisfies $r(x) \leq 0$, the optimal discounted return $V^*(x) = \sum_{k=0}^{\infty} \gamma^k r(x_k^*)$, where $\{x_k^*\}$ is the state trajectory under the optimal policy starting from $x_0^* = x$, is non-positive for all $x$. For states from which no admissible trajectory reaches $\Tref$, every reward vanishes and $V^*(x) = 0$. For states that can reach $\Tref$ under optimal control, at least one reward is strictly negative, giving $V^*(x) < 0$. The negative sublevel set $\{V^* < 0\}$ therefore recovers the backward reachable set~\cite{solanki2026formalizing}, and deeper sublevel sets correspond to trajectories that penetrate further into the target interior. This sign structure makes $V^*$ a Lyapunov candidate whose certification target can be defined directly as a sublevel set without auxiliary construction. The discount factor $\gamma \in (0,1)$ ensures that the Bellman operator is $\gamma$-contractive, guaranteeing convergence of value iteration.

To learn $V^*$, an actor $\pi_\theta : \R^n \to \Ucontrol$ and critic $Q_\phi : \R^n \times \Ucontrol \to \R$ are trained via deep deterministic policy gradient (DDPG)~\cite{lillicrap2016ddpg}. Since all rewards are non-positive, the actor minimizes $Q_\phi$ via $L_{\mathrm{actor}} = \mathbb{E}[Q_\phi(x, \pi_\theta(x))]$, steering states toward the most negative region. The Bellman target $Q_{\mathrm{tgt}}$ and critic loss $L_{\mathrm{critic}}$ are
\begin{align}
  Q_{\mathrm{tgt}}    & = r(x') +
  \gamma\, Q_{\bar{\phi}}\bigl(x', \pi_{\bar{\theta}}(x')\bigr),
  \label{eq:bellman_target}                \\
  L_{\mathrm{critic}} & = \mathbb{E}\bigl[
    (Q_\phi(x,u) - Q_{\mathrm{tgt}})^2\bigr],
  \label{eq:bellman_loss}
\end{align}
where $x' = f(x,u) + w$ with $w$ sampled uniformly from $\Wset$ during training, and $Q_{\bar{\phi}}$, $\pi_{\bar{\theta}}$ are target networks whose parameters are Polyak-averaged copies $\bar{\phi} \leftarrow \tau\phi + (1{-}\tau)\bar{\phi}$, $\bar{\theta} \leftarrow \tau\theta + (1{-}\tau)\bar{\theta}$ with rate $\tau \ll 1$, used to stabilize the Bellman target. Training with noise produces a value function that accounts for process disturbances, yielding more conservative sublevel sets. A reverse curriculum~\cite{bengio2009curriculum} expands the sampling annulus from the target outward.

From the converged critic, the Lyapunov candidate is $V^*(x) = Q_\phi(x, \pi_\theta(x))$. Since training is approximate, $V^*$ is an approximation of the true HJ value, but this does not compromise certification, as Theorem~\ref{thm:traj_cert} is independent of optimality. The certification target is
\begin{equation}\label{eq:target_def}
  \mathcal{T} = \{x : V^*(x) \leq -c_{\mathcal{T}}\}, \quad c_{\mathcal{T}} = \min_{x \in \partial\Tref} |V^*(x)|,
\end{equation}
where $\partial\Tref$ denotes the boundary of $\Tref$. In practice, $c_{\mathcal{T}}$ is computed by propagating zonotopes covering $\partial\Tref$ through $V^*$: since $V^* \leq 0$, zonotope propagation yields an upper bound on $V^*$ per boundary fragment, and the minimum over all fragments gives a verified lower bound on the true $c_{\mathcal{T}}$.

\subsection{Set-Based Policy Distillation}\label{sec:set_distill}

Since a pointwise policy ignores generator spread, it may produce actions that fail to steer all states in the set toward the target. To account for set geometry, the trained actor serves as the oracle $\pi^*$ for distilling a set-based controller $\pi_{\mathrm{set}}$ that maps zonotopes to a single control action.

Because the fragment count and generator dimensions vary across time steps, the set-actor uses a Deep Sets~\cite{zaheer2017deep} architecture. Given a fragment collection $\mathbb{Z} = \{Z^{(1)}, \ldots, Z^{(N)}\}$, each fragment with $p_j$ generators is tokenized into $p_j+1$ vectors: a center token $[c_j^\top, 1]^\top$ and generator tokens $[g_{jk}^\top, 0]^\top$. The appended flag bit distinguishes positional information from uncertainty directions within a unified token space, enabling permutation invariance over both fragments and generators simultaneously without requiring separate encoders. Tokens from all $N$ fragments are concatenated into a single set $\{t_l\}_{l=1}^{L}$ where $L = \sum_{j=1}^{N}(1 + p_j)$. A shared encoder $\varphi : \R^{n+1} \to \R^{64}$ processes each token independently, and the embeddings are aggregated via max-pooling over the entire token set, with padding tokens masked when batches contain variable-size zonotopes. A decoder $\varphi_{\mathrm{dec}} : \R^{64} \to \Ucontrol$ produces the action:
\begin{equation}\label{eq:deepsets}
  \pi_{\mathrm{set}}(\mathbb{Z}) = \varphi_{\mathrm{dec}}\Bigl(
  \max_{l=1}^{L} \varphi(t_l)\Bigr).
\end{equation}

For each training iteration, $B_z$ random zonotopes are sampled. An $H$-step rollout propagates each zonotope through the PWA dynamics: at each step the set-actor computes $u_s = \pi_{\mathrm{set}}(\mathbb{Z}_k)$, the zonotope is mapped to $Z_{k+1}$ via the affine update in~\eqref{eq:pwa}, and $N_s$ points are sampled from $Z_k$. The regret at each point is
\begin{equation}\label{eq:regret}
  \mathrm{reg}(p) = V^*\bigl(f(p, u_s)\bigr)
  - V^*\bigl(f(p, \pi^*(p))\bigr).
\end{equation}
Let $\mathrm{reg}_{k,b} = \{\mathrm{reg}(p) : p \sim Z_k^{(b)}\}$ denote the regret samples for the $b$-th zonotope at step $k$. The $\mathrm{ReLU}$ first gates to positive regret, discarding samples where the set-actor outperforms the oracle. From the remaining positive regrets, only the top $30\%$ are retained, focusing the loss on the worst-case tail. The training loss is
\begin{equation}\label{eq:loss}
   \mathcal{L} = \frac{1}{H B_z}\sum_{k=1}^{H}\sum_{b=1}^{B_z}
  \mathrm{mean}\bigl(
  \mathrm{top}_{30\%}\bigl(\mathrm{ReLU}\bigl(
    \mathrm{reg}_{k,b}\bigr)\bigr)\bigr).
\end{equation}

\subsection{Theoretical Guarantees}\label{sec:theory}

Lipschitz-based certification faces a structural barrier: it requires bounding the worst-case regret of the deployed controller. Let $\mathcal{D}_k$ denote the collection of all fragments at step $k$ after mode-boundary decomposition~\eqref{eq:sd_intersect}. The worst-case regret at step $k$ is
\begin{equation}\label{eq:regret_bound}
  \epsilon_k = \max_{S \in \mathcal{D}_k} \sup_{x \in S}\bigl[V^*(f(x, u_k))
    - \min_{u} V^*(f(x,u))\bigr].
\end{equation}
To bound $\epsilon_k$ via Lipschitz constants, consider the center-tracking policy $u = \pi^*(c)$. With $\pi^*$ being $L_\pi$-Lipschitz and $V^*$ being $L_V$-Lipschitz, for any $x \in S \subseteq \X_i$ with center $c$,
\begin{align*}
  &V^*\!\bigl(f(x, \pi^*(c))\bigr) - V^*\!\bigl(f(x, \pi^*(x))\bigr)\\
  &\quad\leq L_V \bigl\|B_{u,i}\bigl(\pi^*(c) {-} \pi^*(x)\bigr)\bigr\|
  \leq L_V L_B L_\pi \|x {-} c\|,
\end{align*}
where the first inequality uses Lipschitz continuity of $V^*$ and affinity of $f$ in $u$, and the second uses $L_B = \max_i \|B_{u,i}\|$. Since $\|x - c\| \leq \diam(Z)/2$, the regret satisfies $\epsilon_k \leq L_V L_B L_\pi \cdot \diam(Z)/2$. Inverting this bound gives a critical radius below which the regret budget $\epsilon$ can be met:
\begin{equation}\label{eq:critical_radius}
  r_{\mathrm{crit}} = \frac{2\epsilon}{L_V \, L_B \, L_\pi}.
\end{equation}

\begin{proposition}[Expressivity--Certifiability Barrier]\label{prop:tradeoff}
  For a ReLU network with $D$ hidden layers and output-layer weights $W_{\mathrm{out}}$, satisfying $\|W_i\|_2 \leq \rho$ for all hidden layers, Lipschitz-based certification requires
  \[
    \diam(Z) < \frac{2D_\Omega(1{-}\gamma)}{\gamma\, \rho^D \|W_{\mathrm{out}}\|_2 L_B},
  \]
  where $D_\Omega = \diam(\Omega)$. For $\rho > 1$, this bound shrinks exponentially in $D$. For $\rho < 1$, $c_{\mathcal{T}}$ vanishes exponentially in $D$.
\end{proposition}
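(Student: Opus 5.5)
We plan to start from the critical-radius condition~\eqref{eq:critical_radius}. Lipschitz-based certification of a fragment $Z$ needs the regret bound $L_V L_B L_\pi \diam(Z)/2$ to stay below the available decrease budget $\epsilon$, that is, $\diam(Z) < 2\epsilon/(L_V L_B L_\pi)$. The proof then has three parts: an upper bound on the budget $\epsilon$ in terms of the value range, a certified Lipschitz bound for the network in terms of $\rho$ and $D$, and a reading of the two regimes $\rho>1$ and $\rho<1$.

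\textbf{Budget.} The rewards in~\eqref{eq:reward} are non-positive, so $V^*\le 0$, and the one-step Lyapunov decrease that must be certified is measured on the discounted scale. We will bound $\epsilon$ by the span of $V^*$ over $\Omega$ relative to its slope. The Bellman relation $V^*(x)=r(x)+\gamma V^*(x')$ shows that the decrease available per step is $(1-\gamma)$ times the value level, divided by $\gamma$. Using $|V^*|\le L_V D_\Omega$ on $\Omega$, with $V^*$ vanishing outside the backward reachable set and Lipschitz across $\Omega$, we expect
\[
\epsilon \le \frac{(1-\gamma)}{\gamma}\,L_V D_\Omega .
\]
Inserting this into the critical radius makes $L_V$ cancel, leaving $\diam(Z) < 2D_\Omega(1-\gamma)/(\gamma L_B L_\pi)$.

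\textbf{Lipschitz bound.} The Lipschitz constant that the certification uses is the standard certified product bound. For ReLU (1-Lipschitz) activations it is $L_\pi \le \|W_{\mathrm{out}}\|_2\prod_{i=1}^D\|W_i\|_2\le \rho^D\|W_{\mathrm{out}}\|_2$. We treat the certified constant as the one that enters, since the true constant is not computable. Substituting gives the displayed inequality. For $\rho>1$ the right-hand side decays like $\rho^{-D}$, which is exponential shrinkage.

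\textbf{Case $\rho<1$.} Here the same product bound applies to the value network, giving $L_V\le \rho^D\|W_{\mathrm{out}}\|_2$. We will use that $c_{\mathcal T}=\min_{\partial\Tref}|V^*|$ is controlled by the value variation between the boundary of $\Tref$ and a point where $V^*$ vanishes, or equivalently the bias-free output scale. This gives $c_{\mathcal T}\le L_V\cdot \dist \le \rho^D\|W_{\mathrm{out}}\|_2 D_\Omega$, which vanishes exponentially in $D$.

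\textbf{Main obstacle.} The hardest step is the budget bound, which links $\epsilon$ to $(1-\gamma)D_\Omega/\gamma$ times $L_V$. The regret budget is not intrinsically defined in the paper. We will fix it as the Bellman-implied per-step decrease, $V^*(x)-V^*(f)$ on the negative sublevel set, and bound that via the discount structure. If this does not hold uniformly, we will state the proposition as a necessary condition under this normalization of $\epsilon$.
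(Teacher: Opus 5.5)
Your proposal is correct and follows essentially the paper's argument. You combine the critical radius \eqref{eq:critical_radius} with a regret budget of order $(1-\gamma)L_V D_\Omega/\gamma$ so that $L_V$ cancels, then insert the spectral-product bound $\rho^D\|W_{\mathrm{out}}\|_2$ for $L_\pi$ (read, as you rightly note, as the certified constant so the inequality points the right way), and use $c_{\mathcal T}\le\rho^D\|W_{\mathrm{out}}\|_2 D_\Omega$ for $\rho<1$. The paper differs only in where the budget comes from. It takes it directly from the regret term of the descent condition, $c_{\mathcal T}>\gamma\epsilon/(1-\gamma)$, together with $c_{\mathcal T}\le L_V D_\Omega$, which fixes the normalization of $\epsilon$ you flagged as your main obstacle, so the paper needs no separate one-step Bellman argument.
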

\begin{proof}
  Write $\|W_{\mathrm{out}}\|_2 = \max(\|W_{\mathrm{out}}^V\|_2, \|W_{\mathrm{out}}^\pi\|_2)$ for the spectral norms of the output layers of the value and policy networks. Spectral-norm bounds then give $L_V \leq \rho^D \|W_{\mathrm{out}}\|_2$, $L_\pi \leq \rho^D \|W_{\mathrm{out}}\|_2$, and $c_{\mathcal{T}} \leq \rho^D \|W_{\mathrm{out}}\|_2 D_\Omega$. Substituting into the descent condition $c_{\mathcal{T}} > \gamma\epsilon/(1{-}\gamma)$ with~\eqref{eq:critical_radius} yields the stated bound.
\end{proof}

The following theorem bypasses the $L_\pi$ bottleneck entirely by propagating zonotopes through $V^*$ rather than through the policy.

\begin{assumption}[Value Function Sign]\label{ass:sign}
  $V^*(x) \leq -c_{\mathcal{T}} < 0$ for $x \in \mathcal{T}$ and $V^*(x) \to 0$ as $\dist(x, \mathcal{T})$ increases.
\end{assumption}

\begin{theorem}[Trajectory-Certified Set Convergence]\label{thm:traj_cert}
  Given a reachable set trajectory $\{Z_k^{(j)}\}$ under $\pi_{\mathrm{set}}$, propagate each fragment through $V^*$ via zonotope neural network propagation~\cite{singh2018fast, wendl2024setbasedrl}:
  \begin{equation}\label{eq:zono_V}
    \mathcal{Y}_k^{(j)} = \mathrm{prop}(V^*, Z_k^{(j)}) = \zono{c_{\mathcal{Y}}^{(j)}, G_{\mathcal{Y}}^{(j)}} \supseteq \{V^*(x) : x \in Z_k^{(j)}\}.
  \end{equation}
  Define the certified $V^*$-envelope
  \[
    \bar{V}_k = \max_j \bigl[c_{\mathcal{Y}}^{(j)} + \|G_{\mathcal{Y}}^{(j)}\|_1\bigr].
  \] If $\bar{V}_K \leq -c_{\mathcal{T}}$, then $\mathcal{U}(\mathbb{Z}_K) \subseteq \mathcal{T}$. No bound on $L_\pi$ is required.
\end{theorem}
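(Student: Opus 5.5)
The argument is essentially a soundness-of-over-approximation argument, carried out fragment by fragment. The plan is to take an arbitrary point $x \in \mathcal{U}(\mathbb{Z}_K)$ and show that $V^*(x) \leq -c_{\mathcal{T}}$. By the definition of the certification target~\eqref{eq:target_def}, this places $x \in \mathcal{T}$.

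The key steps, in order, are as follows.

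First, by the definition of the physical extent in Definition~\ref{def:dynamics}, there is an index $j$ with $x \in Z_K^{(j)}$.

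Second, the enclosure property~\eqref{eq:zono_V} of zonotope propagation gives $V^*(x) \in \mathcal{Y}_K^{(j)} = \zono{c_{\mathcal{Y}}^{(j)}, G_{\mathcal{Y}}^{(j)}}$.

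Third, I bound the maximum of a one-dimensional zonotope. Since $V^*$ is scalar, $G_{\mathcal{Y}}^{(j)}$ is a row vector $g^\top \in \R^{1\times p}$. Hence every element has the form $c_{\mathcal{Y}}^{(j)} + g^\top\alpha$ with $\alpha \in [-1,1]^p$. By H\"older's inequality,
\[
g^\top\alpha \leq \|g\|_1\|\alpha\|_\infty \leq \|G_{\mathcal{Y}}^{(j)}\|_1 .
\]
Therefore $V^*(x) \leq c_{\mathcal{Y}}^{(j)} + \|G_{\mathcal{Y}}^{(j)}\|_1$. Here I will take care that $\|\cdot\|_1$ means the sum of absolute entries of the row, and not the induced matrix $1$-norm, which would be a max column sum. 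For a row vector the latter reduces to $\max_k|g_k|$ and would be too small. I will state this reading explicitly.

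Fourth, taking the maximum over $j$ gives $V^*(x) \leq \bar V_K \leq -c_{\mathcal{T}}$, so $x \in \mathcal{T}$. Since $x$ was arbitrary, $\mathcal{U}(\mathbb{Z}_K) \subseteq \mathcal{T}$.

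For the remark that no bound on $L_\pi$ is required, I will observe the following. The argument uses only two facts: the trajectory $\{Z_k^{(j)}\}$ is some valid over-approximation of the reachable set produced by~\eqref{eq:set_dynamics}, and the propagation through $V^*$ is sound. The actions $u_k = \pi_{\mathrm{set}}(\mathbb{Z}_k)$ enter only through the concrete, already-computed fragments. The policy is therefore never propagated or Lipschitz-bounded, and neither optimality of $V^*$ nor Assumption~\ref{ass:sign} is needed for this implication. The assumption only guarantees $c_{\mathcal{T}}>0$, so that the target is a proper sublevel set.

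I expect the main obstacle to be the soundness of the enclosure~\eqref{eq:zono_V}, rather than the algebra above. Two points need care here.

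The first is that $V^*(x) = Q_\phi(x,\pi_\theta(x))$ composes the critic with the actor. The network through which the zonotope is propagated must therefore be the full composed map, with sound ReLU relaxations as in~\cite{singh2018fast}. This does propagate through $\pi_\theta$, but only as a component of the fixed function $V^*$, not through the deployed $\pi_{\mathrm{set}}$. I would treat this enclosure as given by the cited propagation method.

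The second is that the fragments $Z_K^{(j)}$ must genuinely cover the true reachable set. This requires that the mode intersections in~\eqref{eq:sd_intersect} be over-approximated by zonotopes and that $\oplus\,\Wset$ be included. I would state this as a standing hypothesis on the computed trajectory.
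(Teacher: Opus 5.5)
Your proposal is correct and follows essentially the same argument as the paper: the scalar output zonotope encloses $V^*$ over each fragment, its maximum is bounded by $c_{\mathcal{Y}}^{(j)} + \|G_{\mathcal{Y}}^{(j)}\|_1$, and you then take the maximum over fragments. The differences are cosmetic. You use H\"older's inequality where the paper computes the exact maximum over $\alpha$. You also conclude membership in $\mathcal{T}$ directly from the definition~\eqref{eq:target_def}, whereas the paper cites Assumption~\ref{ass:sign} for that final inclusion. Your observation that the assumption is unnecessary there is right, since $\mathcal{T}$ is by definition the sublevel set $\{V^* \leq -c_{\mathcal{T}}\}$.
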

\begin{proof}
  By~\eqref{eq:zono_V}, $\mathcal{Y}_k^{(j)}$ is a scalar zonotope enclosing $\{V^*(x) : x \in Z_k^{(j)}\}$. Since each generator coefficient $\alpha_j \in [-1,1]$ is chosen independently, $\max_{\alpha \in [-1,1]^p}(c + G\alpha) = c + \|G\|_1$, so
  \[
    \max_{x \in Z_k^{(j)}} V^*(x) \leq c_{\mathcal{Y}}^{(j)} + \|G_{\mathcal{Y}}^{(j)}\|_1 \leq \bar{V}_k.
  \] Thus $V^*(x) \leq -c_{\mathcal{T}}$ for all $x \in \mathcal{U}(\mathbb{Z}_K)$, giving $\mathcal{U}(\mathbb{Z}_K) \subseteq \{x: V^*(x) \leq -c_{\mathcal{T}}\} \subseteq \mathcal{T}$ by Assumption~\ref{ass:sign}.
\end{proof}

\begin{algorithm}[t]
\caption{Trajectory-Certified Set Convergence}\label{alg:verify}
\begin{algorithmic}[1]
\REQUIRE Initial zonotope $\mathbb{Z}_0$, Lyapunov candidate $V^*$, set-actor $\pi_{\mathrm{set}}$, horizon $K_{\max}$, target depth $c_{\mathcal{T}}$
\ENSURE \texttt{certified}, $K_{\mathrm{cert}}$
\FOR{$k = 0, \ldots, K_{\max}-1$}
  \STATE $u_k \gets \pi_{\mathrm{set}}(\mathbb{Z}_k)$
  \STATE $\mathbb{Z}_{k+1} \gets \emptyset$
  \FOR{$i = 1, \ldots, M$}
    \STATE $\mathbb{Z}_{k,i} \gets \mathbb{Z}_k \cap \X_i$
    \STATE $\mathbb{Z}_{k+1} \gets \mathbb{Z}_{k+1} \cup \bigl(A_i\,\mathbb{Z}_{k,i} + B_{u,i}\,u_k \oplus \Wset\bigr)$
  \ENDFOR
  \FOR{each $Z^{(j)} \in \mathbb{Z}_{k+1}$}
    \STATE $\mathcal{Y}^{(j)} \gets \mathrm{prop}(V^*, Z^{(j)})$
  \ENDFOR
  \STATE $\bar{V}_{k+1} \gets \max_j \bigl[c_{\mathcal{Y}}^{(j)} + \|G_{\mathcal{Y}}^{(j)}\|_1\bigr]$
  \IF{$\bar{V}_{k+1} \leq -c_{\mathcal{T}}$}
    \RETURN $(\texttt{true},\, k{+}1)$
  \ENDIF
\ENDFOR
\RETURN $(\texttt{false},\, K_{\max})$
\end{algorithmic}
\end{algorithm}

Algorithm~\ref{alg:verify} implements this procedure. Because it requires a separate run per initial condition, the following assumptions are introduced to characterize when certification succeeds and to close the terminal guarantee.

\begin{assumption}[Domain Invariance]\label{ass:domain}
  The operating domain $\Omega$ is forward-invariant under $\pi_{\mathrm{set}}$: $f(Z, \pi_{\mathrm{set}}(Z)) \subseteq \Omega$ for all $Z \subseteq \Omega$.
\end{assumption}

\begin{assumption}[Common Quadratic Lyapunov Function (CQLF)]\label{ass:cqlf}
  There exists $P \succ 0$ satisfying $A_i^\top P A_i \preceq \bar\sigma^2 P$ for all $i \in [M]$ with $\bar\sigma < 1$~\cite{johansson1998piecewise}. Congruence by $P^{-1/2}$ yields $\tilde{A}_i^\top \tilde{A}_i \preceq \bar\sigma^2 I$ where $\tilde{A}_i = P^{1/2} A_i P^{-1/2}$, so $\|\tilde{A}_i\|_2 \leq \bar\sigma$ in the $P$-norm $\|x\|_P {=} \|P^{1/2}x\|_2$. When $\|A_i\|_2 < 1$, $P{=}I$ suffices. Otherwise, the linear matrix inequality (LMI) is solved via YALMIP. If infeasible, the framework requires extension to piecewise quadratic Lyapunov functions~\cite{johansson1998piecewise}.
\end{assumption}

\begin{assumption}[Lyapunov Regularity]\label{ass:lyap}
  In addition to Assumption~\ref{ass:sign}, $V^* : \Omega \to \R_{\leq 0}$ satisfies:
  (i)~$V^*$ is $L_V$-Lipschitz on $\Omega$,
  (ii)~for any $x \in \Omega \setminus \mathcal{T}$,
  \begin{equation}\label{eq:descent}
    \min_{u \in \Ucontrol} V^*(f(x, u)) \leq \frac{1}{\gamma}\, V^*(x),
  \end{equation}
  where $\gamma \in (0,1)$ is the HJ discount factor. Condition~(ii) is motivated by Bellman optimality: for deterministic dynamics with $\Wset = \{0\}$, $r(x) = 0$ for $x \notin \mathcal{T}$ by~\eqref{eq:reward}, so the Bellman equation gives $V^*(x) = \gamma \min_u V^*(f(x,u))$, yielding condition~(ii) with equality. Under bounded noise, Lipschitz continuity gives $|\mathbb{E}_{w}[V^*(f(x,u)+w)] - V^*(f(x,u))| \leq L_V\, \mathbb{E}[\|w\|]$, introducing a perturbation of order $O(L_V \sigma_w \sqrt{n})$ that is negligible relative to $c_{\mathcal{T}}$ for small $\sigma_w$. It is used only by Proposition~\ref{prop:diagnostic} and Theorem~\ref{thm:local_cert}.
\end{assumption}

\begin{remark}[Over-Approximation and Hausdorff Inflation]
  Since each reachability step produces an over-approximating zonotope $\hat{Z} \supseteq Z$ and $V^*$ is $L_V$-Lipschitz, the value over-approximation satisfies $\sup_{\hat{Z}} V^* \leq \sup_Z V^* + L_V \eta$, where $\eta$ bounds the Hausdorff distance $d_H(\hat{Z}, Z)$ uniformly over all steps and fragments. In practice, $\eta$ arises from three sources:
  zonotope--halfspace intersection at mode boundaries yields
  enclosing zonotopes rather than exact intersections,
  generator order reduction when the budget is exceeded,
  and the Minkowski sum with the disturbance set $\Wset$ at each step.
  Within each mode, the affine map itself is exact for zonotopes.
  We write $R_w = \|G_w\|_P = \sigma_w \sum_{j=1}^n \|P^{1/2} e_j\|_2$ for the per-step noise contribution in the $P$-norm.
\end{remark}

Theorem~\ref{thm:traj_cert} requires only Assumption~\ref{ass:sign} and zonotope propagation and is independent of Assumptions~\ref{ass:domain}--\ref{ass:lyap}. Each initial condition requires a separate verification run.

\begin{proposition}[Diagnostic Sufficient Condition]\label{prop:diagnostic}
  A sufficient condition for universal, initialization-independent, set convergence is
  \begin{equation}\label{eq:convergence_cond}
    c_{\mathcal{T}} > L_V \frac{\eta}{1 - \bar\sigma} + \frac{\gamma\,\epsilon}{1 - \gamma},
  \end{equation}
  where $\bar\sigma$ is the CQLF contraction rate, $\eta$ the Hausdorff over-approximation error, and $\epsilon = \sup_k \epsilon_k$ the worst-case regret~\eqref{eq:regret_bound}. The first term is the steady-state Lipschitz deviation and the second the geometric regret accumulation.
\end{proposition}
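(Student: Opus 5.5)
The plan is to track the certified envelope $\bar V_k$ of Theorem~\ref{thm:traj_cert} along an arbitrary trajectory and show that it enters $(-\infty,-c_{\mathcal{T}}]$ in finite time whenever \eqref{eq:convergence_cond} holds. Nothing in the argument will depend on $Z_0$, beyond $Z_0\subseteq\Omega$. By Assumption~\ref{ass:domain}, every fragment remains in $\Omega$, so Assumption~\ref{ass:lyap} applies at every step. The quantity I will follow is $m_k=\sup_{x\in\mathcal{U}(\mathbb{Z}_k)} V^*(x)$, and separately its over-approximated counterpart.

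\textbf{Step 1: one-step recursion for exact sets.} Take $x\in S\in\mathcal{D}_k$ with $x\notin\mathcal{T}$. Combining \eqref{eq:regret_bound} with the descent condition \eqref{eq:descent} gives
\[
V^*(f(x,u_k))\le \min_u V^*(f(x,u))+\epsilon_k\le \tfrac{1}{\gamma}V^*(x)+\epsilon .
\]
This bound runs in the wrong direction for $V^*\le 0$: since $1/\gamma>1$, the factor makes $V^*$ more negative, which is fine. The natural reparametrisation is $s_k=-m_k\ge 0$, which yields $s_{k+1}\ge s_k/\gamma-\epsilon$ for points still outside $\mathcal{T}$. Iterating, the regret accumulates as the geometric series $\epsilon\sum_j\gamma^j\le\gamma\epsilon/(1-\gamma)$ once it is normalised by the $\gamma$-scaled recursion. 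So $s_k$ grows and eventually exceeds $c_{\mathcal{T}}$ provided the depth $c_{\mathcal{T}}$ dominates the accumulated regret $\gamma\epsilon/(1-\gamma)$. I will write this out as the standard comparison lemma for the affine recursion, with the threshold case giving the second term of \eqref{eq:convergence_cond}.

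\textbf{Step 2: contraction of the over-approximation error.} Under Assumption~\ref{ass:cqlf}, the exact map is a $\bar\sigma$-contraction in $\|\cdot\|_P$ within each mode. The Hausdorff inflation $\eta$ injected at each step is therefore propagated with factor $\bar\sigma$, so the total deviation between the computed fragments and the true reachable set stays bounded by $\eta\sum_j\bar\sigma^j=\eta/(1-\bar\sigma)$. By the remark on Hausdorff inflation, the value function then satisfies $\sup_{\hat Z}V^*\le\sup_Z V^*+L_V\eta/(1-\bar\sigma)$. The noise contribution $R_w$ is absorbed into $\eta$. One caveat is that the affine map is a contraction only for $A_i$ acting on differences within the same mode. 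For two points in different modes I would bound the distance via the enclosing fragment, which is where I expect the argument to be weakest.

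\textbf{Step 3: combine.} The computed envelope satisfies
\[
\bar V_k\le m_k+L_V\eta/(1-\bar\sigma).
\]
The limiting margin from Step 1 pushes $m_k$ below $-c'$ for any $c'<c_{\mathcal{T}}$ beyond the regret term. So \eqref{eq:convergence_cond} guarantees $\bar V_K\le -c_{\mathcal{T}}$ for some finite $K$, and Theorem~\ref{thm:traj_cert} gives $\mathcal{U}(\mathbb{Z}_K)\subseteq\mathcal{T}$.

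\textbf{Main obstacle.} The hard part is making Step 1 rigorous. The descent condition \eqref{eq:descent} only holds off $\mathcal{T}$, and fragments may straddle $\partial\mathcal{T}$. In addition, mixing the multiplicative $1/\gamma$ recursion with additive $\epsilon$ must yield precisely $\gamma\epsilon/(1-\gamma)$. I would first try to handle the straddling fragments by treating points already in $\mathcal{T}$ separately, using $V^*\le -c_{\mathcal{T}}$ there. I would also accept that the statement is only a sufficient diagnostic, so a conservative constant would be acceptable.
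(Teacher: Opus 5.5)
\textbf{The gap is in the conclusion of Step~1.} The recursion $s_{k+1}\ge s_k/\gamma-\epsilon$ is expanding about its fixed point $\gamma\epsilon/(1-\gamma)$. Normalising, as you did, gives $s_k\ge\gamma^{-k}\bigl(s_0-\gamma\epsilon/(1-\gamma)\bigr)$. This forces progress only when the \emph{initial} depth $s_0=-\sup_{\mathcal{U}(\mathbb{Z}_0)}V^*$ exceeds $\gamma\epsilon/(1-\gamma)$. The target depth $c_{\mathcal{T}}$ never enters this comparison. On $\{V^*=0\}$, which Assumption~\ref{ass:sign} allows away from $\mathcal{T}$, the bound $V^*(f(x,u_k))\le\epsilon$ is vacuous. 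So ``provided $c_{\mathcal{T}}$ dominates the accumulated regret'' does not follow.

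In fact no argument from the listed hypotheses gives initialization independence. Take $n=m=1$, $\X_1=\{x\le 0\}$ with $A_1=\tfrac12$, $B_{u,1}=0$, and $\X_2=\{x\ge 0\}$ with $A_2=\tfrac12$, $B_{u,2}=1$. Let $\Ucontrol=[0,1]$, $\Wset=\{0\}$, $\Omega=[-1,3]$, $\Tref=[1,3]$, $\gamma=0.95$, $V^*(x)=-\min(\max(x,0),2)$ on $\Omega$, and $\pi_{\mathrm{set}}\equiv 1$. Then Assumptions~\ref{ass:sign}--\ref{ass:lyap} hold with $\bar\sigma=\tfrac12$, $c_{\mathcal{T}}=1$ and $\epsilon=\eta=0$, so \eqref{eq:convergence_cond} holds. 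Yet every $Z_0\subset[-1,0)$ stays in $[-1,0)$ forever.

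The paper's own proof makes the same move. It iterates $V^*(f(c_k,u_k))\le\gamma^{-1}V^*(c_k)+\epsilon$ along a fragment center and compares the summed series with $c_{\mathcal{T}}$, so the missing step is not supplied there either. A correct version needs an initialization-dependent hypothesis, such as $\sup_{\mathcal{U}(\mathbb{Z}_0)}V^*<-\gamma\epsilon/(1-\gamma)$ tightened by the inflation terms.

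Your proposed fix for straddling fragments also bounds the wrong quantity. $V^*\le-c_{\mathcal{T}}$ on $\mathcal{T}$ says nothing about $V^*(f(x,u_k))$. Since $\mathcal{T}$ is not assumed invariant, points that entered early may leave before the rest of the set arrives.

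\textbf{Two further steps need repair.} In Step~3, $\bar V_k$ is the output of zonotope propagation through the ReLU network $V^*$. It exceeds $\sup_{\mathcal{U}(\mathbb{Z}_k)}V^*$ by a relaxation gap that $\eta$, a state-space Hausdorff bound, does not control. So $\bar V_k\le m_k+L_V\eta/(1-\bar\sigma)$ is unjustified. Conclude containment directly from $\{V^*\le -c_{\mathcal{T}}\}\subseteq\mathcal{T}$ instead of through Theorem~\ref{thm:traj_cert}, as the paper does.

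In Step~2, contracting the computed-versus-exact deviation requires comparing points in different modes. There the PWA map is in general discontinuous: nothing in \eqref{eq:pwa} forces $A_ix+B_{u,i}u=A_jx+B_{u,j}u$ on shared boundaries. So $\bar\sigma$-contraction of the deviation does not follow. The paper avoids this by applying the CQLF only to each fragment's own generator matrix, which after splitting is multiplied by a single $A_i$. This gives $R_{k+1}\le\bar\sigma R_k+\eta$ for $R_k=\|G_k\|_P$, which it combines with the center via $\sup V^*\le V^*(c_k)+L_VR_k$. That is where the first term of \eqref{eq:convergence_cond} comes from, at the price of an informal treatment of center reassignment at boundaries.

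Within your pointwise framework, Step~2 is not needed at all. The fragments in $\mathcal{D}_k$ of \eqref{eq:regret_bound} are the computed ones, so you can apply the descent to them directly and absorb the inflation as an additive $O(L_V\eta)$ term per step. This yields a threshold of order $\gamma(\epsilon+L_V\eta)/(1-\gamma)$ on the initial depth, not \eqref{eq:convergence_cond}.
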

\begin{proof}
  For a generator matrix $G$, define $\|G\|_P = \sum_j \|g_j\|_P$ as the sum of column $P$-norms, which bounds the zonotope radius in the $P$-norm. By the CQLF condition (Assumption~\ref{ass:cqlf}), within each mode the generator norm contracts as $\|A_i G_k\|_P \leq \bar\sigma \|G_k\|_P$. Accounting for per-step over-approximation inflation $\eta$, the generator norm follows the recurrence
  \[
    R_{k+1} \leq \bar\sigma R_k + \eta, \quad \text{so} \quad R_k \leq \bar\sigma^k R_0 + \tfrac{\eta}{1{-}\bar\sigma}.
  \]
  For the center trajectory $\{c_k\}$ of a representative fragment, Assumption~\ref{ass:lyap}(ii) and the worst-case regret~\eqref{eq:regret_bound} give $V^*(f(c_k, u_k)) \leq \gamma^{-1} V^*(c_k) + \epsilon$. Combining with Lipschitz continuity of $V^*$,
  \[
    \sup\nolimits_{x \in \mathcal{U}(\mathbb{Z}_k)} V^*(x) \leq V^*(c_k) + L_V R_k.
  \]
  As $k \to \infty$, $R_k \to \eta/(1 - \bar\sigma)$ and the geometric regret series sums to $\gamma\epsilon/(1{-}\gamma)$, so condition~\eqref{eq:convergence_cond} ensures $L_V R_\infty + \gamma\epsilon/(1{-}\gamma) < c_{\mathcal{T}}$, giving $\mathcal{U}(\mathbb{Z}_K) \subseteq \mathcal{T}$ in finite time. The center-tracking argument assumes that each fragment center remains interior to its mode region, and when mode-boundary splitting reassigns a center, the displacement is absorbed into $L_V R_k$.
\end{proof}

In practice, the global Lipschitz bounds $L_V$ and $L_\pi$ severely over-estimate the true local behavior, causing the RHS of condition~\eqref{eq:convergence_cond} to exceed $c_{\mathcal{T}}$ by a large margin. The following theorem closes this gap within a near-target core by introducing a local value depth $c_{\mathcal{T}}^\ell = \min_{x \in \partial\mathcal{T}} |V_\ell^*(x)|$, the minimum absolute value of the local certificate network on the target boundary.

\begin{theorem}[Certified Terminal Convergence]\label{thm:local_cert}
  Let $\mathcal{C} \supseteq \mathcal{T}$ be a terminal core region, and let $V_\ell^*$, $\pi_\ell^*$ be a locally trained certificate network on $\mathcal{C}$ satisfying Assumption~\ref{ass:lyap} restricted to~$\mathcal{C}$, with Lipschitz bounds $L_V^{\mathrm{cert}}$, $L_\pi^{\mathrm{cert}}$ certified via zonotope propagation~\cite{singh2018fast}. Let $\eta_{\mathcal{C}}$ be a per-step bound on the Hausdorff inflation within $\mathcal{C}$, accounting for mode-boundary splitting, order reduction, and the disturbance $\Wset$. Then any zonotope fragment entering $\mathcal{C}$ with diameter below
  \begin{equation}\label{eq:d_cert}
    \bar d_{\mathrm{cert}} = \frac{2\bigl(c_{\mathcal{T}}^\ell - L_V^{\mathrm{cert}}\, \eta_{\mathcal{C}}/(1{-}\bar\sigma)\bigr)(1{-}\gamma)}{\gamma\, L_V^{\mathrm{cert}}\, L_B\, L_\pi^{\mathrm{cert}}}
  \end{equation}
  satisfies condition~\eqref{eq:convergence_cond} with all quantities certified, provided $c_{\mathcal{T}}^\ell > L_V^{\mathrm{cert}}\, \eta_{\mathcal{C}}/(1{-}\bar\sigma)$.
\end{theorem}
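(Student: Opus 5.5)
The plan is to reduce Theorem~\ref{thm:local_cert} to Proposition~\ref{prop:diagnostic}, applied on $\mathcal{C}$ with the local certificate $V_\ell^*$ in place of $V^*$. First, note that every quantity in condition~\eqref{eq:convergence_cond} can be replaced by a certified counterpart. The Lipschitz constant $L_V$ becomes $L_V^{\mathrm{cert}}$, which is sound because zonotope propagation over-approximates the Jacobian range. The inflation $\eta$ becomes $\eta_{\mathcal{C}}$. The depth $c_{\mathcal{T}}$ becomes $c_{\mathcal{T}}^\ell$, lower-bounded by propagating boundary-covering zonotopes exactly as for~\eqref{eq:target_def}. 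The contraction rate $\bar\sigma$ comes from Assumption~\ref{ass:cqlf}, and since it is an LMI certificate it needs no further verification.

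Next, bound the regret $\epsilon$ for a fragment of diameter $d$ entering $\mathcal{C}$, using the center-tracking policy $u=\pi_\ell^*(c)$. I repeat the computation preceding~\eqref{eq:critical_radius} with certified constants: $\epsilon \le L_V^{\mathrm{cert}} L_B L_\pi^{\mathrm{cert}}\, d/2$. This uses three facts: $\|x-c\|\le d/2$, $f$ is affine in $u$ within a mode, and $L_B=\max_i\|B_{u,i}\|$. Because the deployed action may differ from the center action, I would state that the bound applies to the certified local controller $\pi_\ell^*(c)$, or equivalently that $\epsilon$ bounds the regret of any action no worse than it. Substituting into~\eqref{eq:convergence_cond} gives
\[
c_{\mathcal{T}}^\ell - \frac{L_V^{\mathrm{cert}}\eta_{\mathcal{C}}}{1-\bar\sigma} > \frac{\gamma}{1-\gamma}\cdot\frac{L_V^{\mathrm{cert}}L_BL_\pi^{\mathrm{cert}}\,d}{2}.
\]
Solving for $d$ yields exactly $d<\bar d_{\mathrm{cert}}$ in~\eqref{eq:d_cert}. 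The proviso $c_{\mathcal{T}}^\ell > L_V^{\mathrm{cert}}\eta_{\mathcal{C}}/(1-\bar\sigma)$ is precisely what makes $\bar d_{\mathrm{cert}}>0$, so the admissible class of fragments is nonempty.

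The main obstacle is justifying that Proposition~\ref{prop:diagnostic} applies along the whole remaining trajectory rather than only at entry. This requires two invariances. First, the fragment must stay in $\mathcal{C}$, where Assumption~\ref{ass:lyap} holds for $V_\ell^*$. Second, its diameter must stay below $\bar d_{\mathrm{cert}}$, so that the regret bound persists at each step.

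For the diameter, I would use the recurrence $R_{k+1}\le\bar\sigma R_k+\eta_{\mathcal{C}}$ from the proof of Proposition~\ref{prop:diagnostic}. It shows that the $P$-norm radius is nonincreasing whenever $R_k\ge \eta_{\mathcal{C}}/(1-\bar\sigma)$, and otherwise stays below that level. I would therefore take the worst case $\max(R_0,\eta_{\mathcal{C}}/(1-\bar\sigma))$, absorbing the $P$-norm versus Euclidean conversion into the constants, or simply assuming $P=I$ locally.

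For containment in $\mathcal{C}$, I would argue that descent of the certified envelope keeps the sublevel set of $V_\ell^*$ nested inside $\mathcal{C}$. If this cannot be closed, I would state containment as part of the hypothesis "entering $\mathcal{C}$", read as remaining in the core. With both invariances in hand, the conclusion of Proposition~\ref{prop:diagnostic} gives finite-time containment in $\mathcal{T}$, with every constant certified.
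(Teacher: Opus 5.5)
Your proposal is correct and matches the paper's proof: replace $L_V,\eta,c_{\mathcal{T}}$ by $L_V^{\mathrm{cert}},\eta_{\mathcal{C}},c_{\mathcal{T}}^\ell$, insert the center-tracking bound $\epsilon\le L_V^{\mathrm{cert}}L_BL_\pi^{\mathrm{cert}}\,d/2$ into~\eqref{eq:convergence_cond}, and solve for $d$, with the proviso guaranteeing $\bar d_{\mathrm{cert}}>0$. Your invariance discussion (staying in $\mathcal{C}$, diameter staying below $\bar d_{\mathrm{cert}}$) goes beyond the paper, which stops at this algebra because the statement only asserts that the condition holds; if you keep that part, note that your diameter argument also needs $2\eta_{\mathcal{C}}/(1-\bar\sigma)$, after the $P$-norm conversion, to lie below $\bar d_{\mathrm{cert}}$, and the proviso alone does not give this.
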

\begin{proof}
  Proposition~\ref{prop:diagnostic} establishes condition~\eqref{eq:convergence_cond} with general $\eta$. Within $\mathcal{C}$, the per-step Hausdorff inflation is bounded by $\eta_{\mathcal{C}}$, giving steady-state generator radius $R_\infty = \eta_{\mathcal{C}}/(1{-}\bar\sigma)$. When $\mathcal{C}$ lies within a single mode region, no splitting occurs and $\eta_{\mathcal{C}} = R_w$. When $\mathcal{C}$ spans multiple modes, $\eta_{\mathcal{C}}$ additionally accounts for intersection and order-reduction errors. Condition~\eqref{eq:convergence_cond} thus requires $c_{\mathcal{T}}^\ell > L_V^{\mathrm{cert}}\, \eta_{\mathcal{C}}/(1{-}\bar\sigma) + \gamma\epsilon/(1{-}\gamma)$. The per-fragment regret bound gives $\epsilon \leq L_V^{\mathrm{cert}}\, L_B\, L_\pi^{\mathrm{cert}} \cdot \diam/2$. Substituting and solving for $\diam$ yields $\bar d_{\mathrm{cert}}$ as stated.
\end{proof}

\begin{remark}[Indirect Verification of Descent]
  Direct verification of the composite descent $\gamma V_\ell^*(f(x, \pi_\ell^*(x))) - V_\ell^*(x) \leq 0$ via zonotope propagation is infeasible: ReLU relaxation gaps accumulate multiplicatively across composed networks. Theorem~\ref{thm:local_cert} avoids this by bounding $L_V^{\mathrm{cert}}$ and $L_\pi^{\mathrm{cert}}$ separately, each via propagation through a single network.
\end{remark}

The local certificate network $V_\ell^*$, $\pi_\ell^*$ uses the same spectral constraint $\|W_i\|_2 \leq 1.0$ and the same travel-cost reward~\eqref{eq:reward} as the global training, but is trained on the restricted domain $\mathcal{C}$ with radius only slightly larger than $r_t$. Because $\mathcal{C}$ is compact around $\mathcal{T}$, the majority of sampled states fall inside $\Tref$ and receive non-zero reward, making the reward signal effectively dense. In contrast, global training samples from a much larger domain where the target volume is exponentially small in $n$, so most samples receive zero reward and produce negligible $c_{\mathcal{T}}^\ell$.

\section{Numerical Experiments}\label{sec:experiments}

\begin{figure*}[t]
    \centering
    \includegraphics[width=0.33\textwidth]{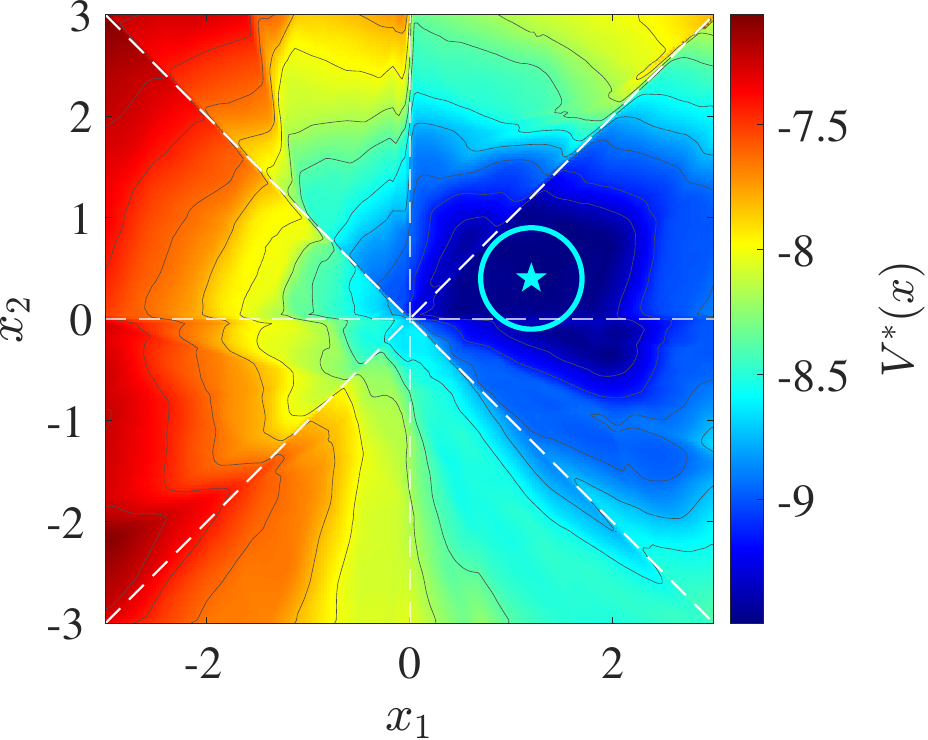}\hspace{0.05\textwidth}
    \includegraphics[width=0.33\textwidth]{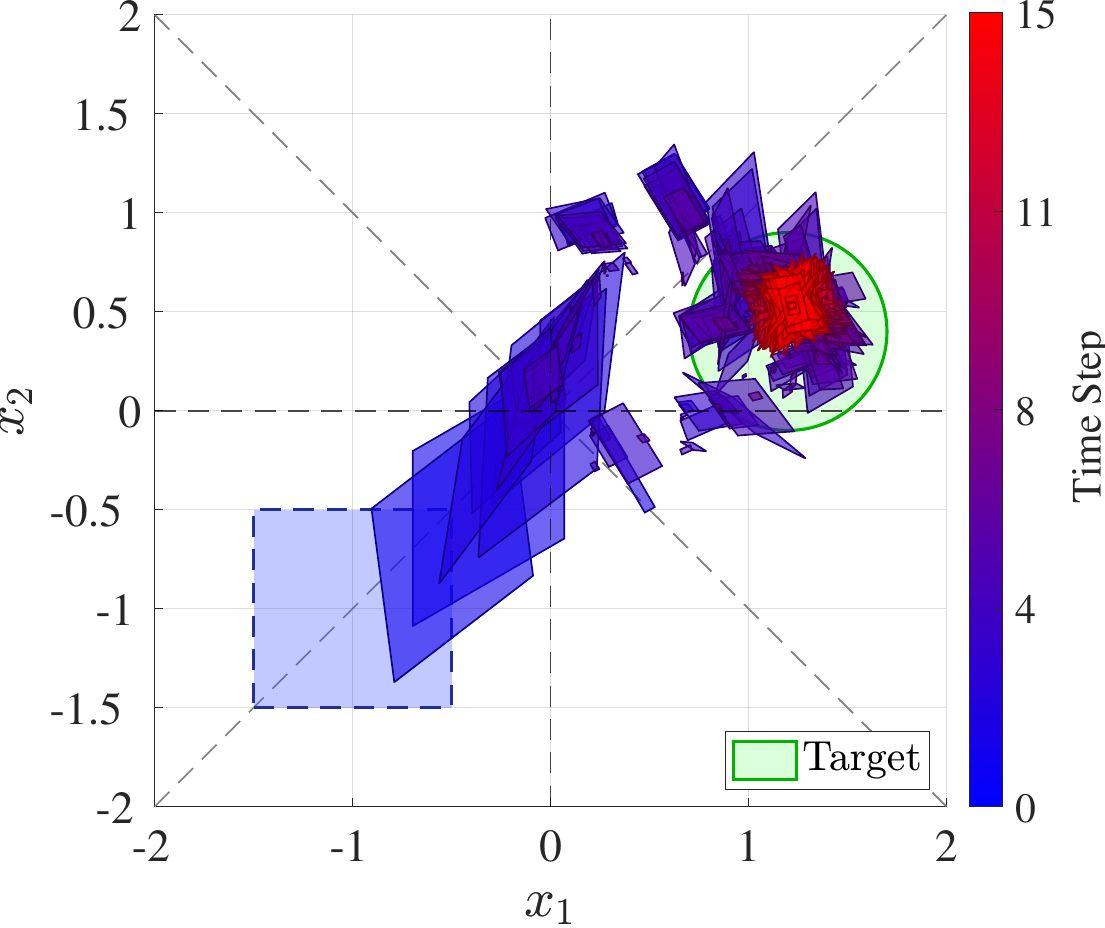}
    \caption{8Sec benchmark, $M{=}8$ with non-axis-aligned boundaries. Left: Learned HJ value function $V^*(x)$ with eight mode boundaries shown dashed. The target $\mathcal{T}$ is located at the basin minimum, and contour lines show level sets. Right: Set-actor reachability from $Z_0 = \zono{(-1,-1)^\top,\, 0.5I_2}$. Color encodes time step.}
    \label{fig:hj_value}
\end{figure*}

\subsection{Benchmark Systems}

Four PWA systems are evaluated in Table~\ref{tab:systems} in order of increasing complexity. 4Quad tests baseline functionality with axis-aligned splitting. 8Sec tests non-axis-aligned boundaries and larger mode count. CoupledOsc introduces $\|A_i\|_2 > 1$, underactuation with $m < n$, and CQLF necessity. TripleOsc pushes to $n{=}6$ with $M{=}8$ modes to test scalability of both trajectory and terminal certificates.

\begin{table}[t]
    \centering
    \caption{Benchmark PWA systems. $n$: state dim, $m$: input dim, $M$: modes, Target: center $x_t$, $r_t$: target radius.}
    \label{tab:systems}
    {\small\setlength{\tabcolsep}{3pt}
    \resizebox{\columnwidth}{!}{%
        \begin{tabular}{lccccccc}
            \toprule
            System     & $n$ & $m$ & $M$ & $\|A_i\|_2 / \|\tilde{A}_i\|_2$ & Bdry.    & Target       & $r_t$  \\
            \midrule
            4Quad      & 2   & 2   & 4   & $0.79$                          & axis     & $(2,2)$      & $0.50$ \\
            8Sec       & 2   & 2   & 8   & $0.79$                          & diag.    & $(1.2, 0.4)$ & $0.50$ \\
            CoupledOsc & 4   & 2   & 4   & $1.40 {\to} 0.84$               & non-axis & $0$          & $0.30$ \\
            TripleOsc  & 6   & 3   & 8   & $1.31 {\to} 0.89$               & non-axis & $0$          & $0.30$ \\
            \bottomrule
        \end{tabular}}}
\end{table}

4Quad, $M{=}4$ with axis-aligned boundaries, uses mode matrices $A_1 = A_3 = A_{\mathrm{base}}$, $A_2 = A_4 = A_{\mathrm{base}}^\top$ with $A_{\mathrm{base}} = \bigl[\begin{smallmatrix} 0.75 & 0.25 \\ -0.25 & 0.75 \end{smallmatrix}\bigr]$, yielding $\|A_i\|_2 = 0.79 < 1$ so $P{=}I$ suffices.

8Sec, $M{=}8$ with non-axis-aligned boundaries, partitions the state space into eight sectors defined by the hyperplanes $x_1{=}0$, $x_2{=}0$, $x_1{=}x_2$, $x_1{=}{-}x_2$, with rotation matrices $A_i = 0.79\, R(\theta_i)$, $\theta_i \in [\pi/12, \pi/6]$.

CoupledOsc, $n{=}4$ and $M{=}4$, models two masses coupled by a piecewise-linear spring whose stiffness switches between tension and compression regimes. In original coordinates, $\|A_i\|_2 \in [1.14, 1.40]$, all exceeding unity, while $\rho(A_i) < 1$. The CQLF transformation $\tilde{x} = P^{1/2}x$ yields $\|\tilde{A}_i\|_2 = 0.84$, which shows that the framework handles systems with per-mode transient growth. Mode boundaries become non-axis-aligned in $P$-coordinates, testing halfspace splitting at $n{=}4$.

TripleOsc, $n{=}6$ and $M{=}8$, extends to three coupled oscillators with $\|A_i\|_2$ up to $1.31$. The CQLF yields $\|\tilde{A}_i\|_2 = 0.89$, input dimension $m{=}3$.

\subsection{Training Configuration}\label{sec:training_config}

Stage~1, HJ training, uses DDPG~\cite{lillicrap2016ddpg} with batch $512$, $\gamma = 0.95$, learning rates $10^{-3}$/$10^{-4}$ for critic/actor, Polyak rate $\tau = 0.005$, exploration noise $\sigma_{\mathrm{expl}} = 0.2$, and actor/critic networks of $2 \times 256$ ReLU units. The reverse curriculum starts at the target boundary and expands at rate $0.10$ per $5{,}000$ iterations, running up to $100$k, extended to $150$k for $n{=}4$ and $500$k for $n{=}6$.

Stage~2, distillation, uses $B_z{=}96$ zonotopes per batch with $2$--$10$ generators, $H{=}6$-step rollouts, $N_s{=}32$ sampled points per fragment, top-$30\%$ regret aggregation, and cosine annealing from $5{\times}10^{-4}$ to $5{\times}10^{-5}$ over $2{,}000$--$3{,}000$ iterations. To prepare the set-actor for the multi-fragment inputs produced by mode-boundary splitting during verification, $30\%$ of training iterations present $2$--$4$ small zonotopes clustered around a base point, teaching the encoder to produce consistent representations regardless of center token count.

Stage~3, verification, uses CORA 2024~\cite{althoff2015cora} with $K_{\max}{=}15$ steps, generator budget $40$, and fragment budget $500$, reduced to $300$ for $n \geq 4$. All experiments run in MATLAB R2024b on a workstation with an NVIDIA RTX 4070 Ti GPU. HJ training completes in $40$--$55$ minutes, distillation in $8$--$15$ minutes, CORA verification in $2$--$13$ minutes per configuration.

\subsection{Results}\label{sec:results}

Three key findings emerge: (i)~the set-actor achieves $100\%$ strict containment on all four benchmarks and is the only tested method to do so (Table~\ref{tab:reach}), (ii)~Theorem~\ref{thm:traj_cert} certifies convergence with positive margin up to $n{=}6$ without bounding $L_\pi$, confirmed in Table~\ref{tab:stepwise}, and (iii)~Theorem~\ref{thm:local_cert} closes the terminal guarantee on all four benchmarks up to $n{=}6$, with restricted-domain training resolving the sparse-reward signal attenuation at higher dimensions.

\subsubsection{Pipeline Results}

The reverse curriculum stabilizes HJ training under sparse rewards, expanding from the target boundary to $[-3,3]^2$ at $n{=}2$, $[-2,2]^4$ at $n{=}4$, and $[-1.5,1.5]^6$ at $n{=}6$. The learned $V^*(x)$ forms a smooth basin centered on the target as shown in Fig.~\ref{fig:hj_value}. The distillation loss converges within $200$ iterations, and worst-case regret remains bounded throughout training.

As shown in Table~\ref{tab:reach} and Figs.~\ref{fig:hj_value}--\ref{fig:reach_combined}, the set-actor achieves $100\%$ strict containment on all four benchmarks. For $n{\geq}4$, the fragment budget of $300$ is reached by step~$10$ on CoupledOsc and step~$7$ on TripleOsc, and Monte Carlo evaluations confirm that tracked fragments retain over $94\%$ of the true reachable set volume. Theorem~\ref{thm:traj_cert} certifies convergence for the tracked subset on every system, consistent with standard practice in zonotope-based reachability tools~\cite{althoff2015cora}.

We further evaluate on a grid of $60$ initializations for 4Quad, comprising $5 \times 4$ centers in $[-2.5, 0.5]^2$ with initial generator magnitude $r_0 \in \{0.20, 0.35, 0.50\}$, i.e., $Z_0 = \zono{c, r_0 I_2}$. Success rate: $100\%$ at $r_0{=}0.20$, $85\%$ at $r_0{=}0.35$, $50\%$ at $r_0{=}0.50$, yielding $47/60 = 78.3\%$ overall. All $13$ failures satisfy the predictive criterion $N_0 \geq B_{\mathrm{budget}}/K_{\max}$, where $N_0$ is the fragment count at step~$1$ and $B_{\mathrm{budget}}$ is the fragment budget, set to $500$ for $n{=}2$ and $300$ for $n{\geq}4$: initial zonotopes straddling ${\geq}3$ mode boundaries produce ${>}100$ fragments at step~$1$. Among the $47$ successful configurations, all produce terminal diameters below $\bar d_{\mathrm{cert}} = 0.135$, entering the regime of Theorem~\ref{thm:local_cert}.

\begin{table}[t]
    \centering
    \caption{Reachability verification across benchmarks. Frags${\subseteq}\mathcal{T}$: fragments fully contained / total. Contain.: area coverage (\%). Max viol.: maximum Hausdorff distance ($\times 10^{-2}$) of any fragment to $\partial\mathcal{T}$, where $0.0$ denotes strict containment. Online: per-step inference time. Baselines evaluated on 4Quad only. Orc: oracle-center ablation.}
    \label{tab:reach}
    {\small\setlength{\tabcolsep}{3pt}
        \begin{tabular}{lcccccc}
            \toprule
                                          & \multicolumn{4}{c}{Set-Actor}
                                          & Orc                           & $V^*$-Opt                                     \\
            \cmidrule(lr){2-5} \cmidrule(lr){6-6} \cmidrule(lr){7-7}
                                          & 4Q                            & 8Sec  & CplO       & TrO   & 4Q     & 4Q     \\
            \midrule
            $n$/$M$                       & 2/4                           & 2/8   & 4/4        & 6/8   & 2/4    & 2/4    \\
            Frags${\subseteq}\mathcal{T}$ & 14/14                         & 106/106   & 300/300    & 300/300 & 22/25  & 15/18  \\
            Contain.                      & 100\%                         & 100\% & 100\%      & 100\% & 98.8\% & 99.4\% \\
            Max viol.                     & 0.0                           & 0.0   & 0.0        & 0.0   & 380.8  & 11.8   \\
            Online                        & 2.2ms                         & 1.8ms & 4.1ms      & 3.8ms & 2.1ms  & 669ms  \\
            \bottomrule
        \end{tabular}}
\end{table}

\begin{figure*}[t]
    \centering
    \includegraphics[width=0.33\textwidth]{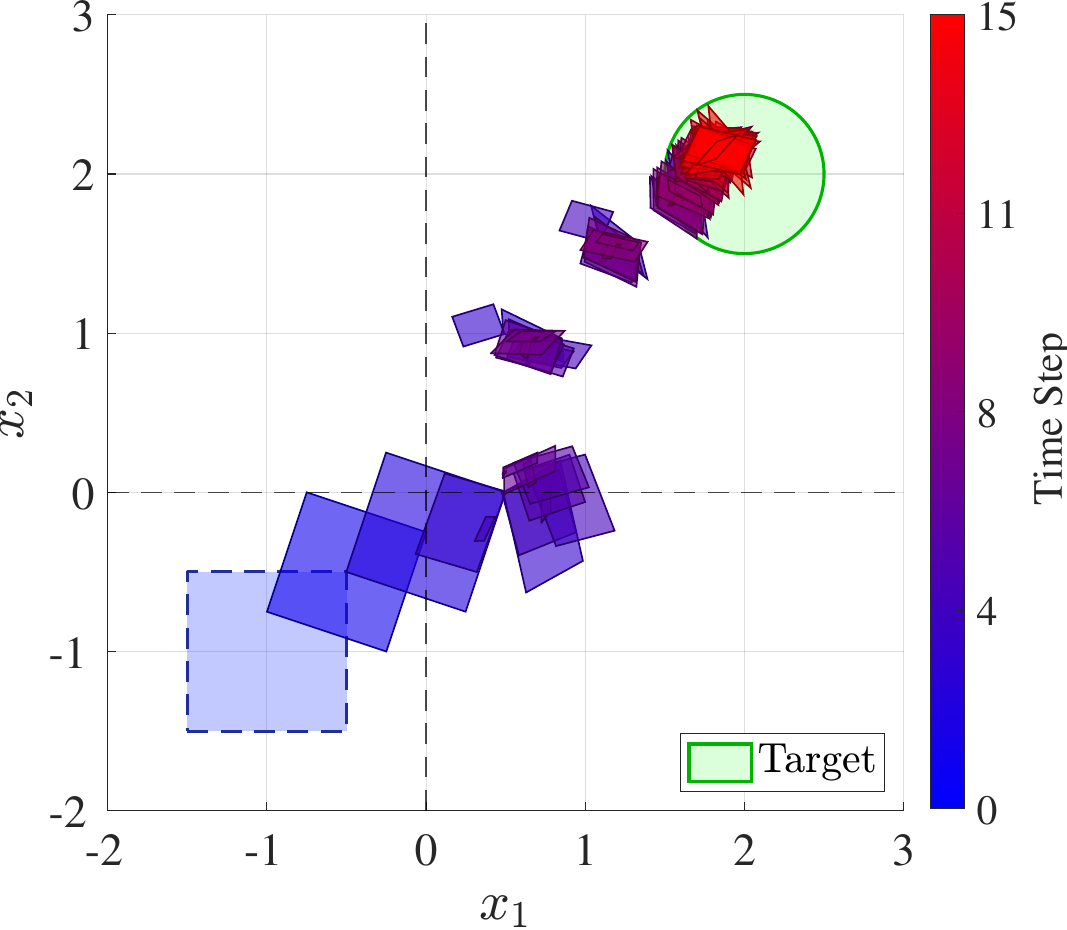}\hfill
    \includegraphics[width=0.33\textwidth]{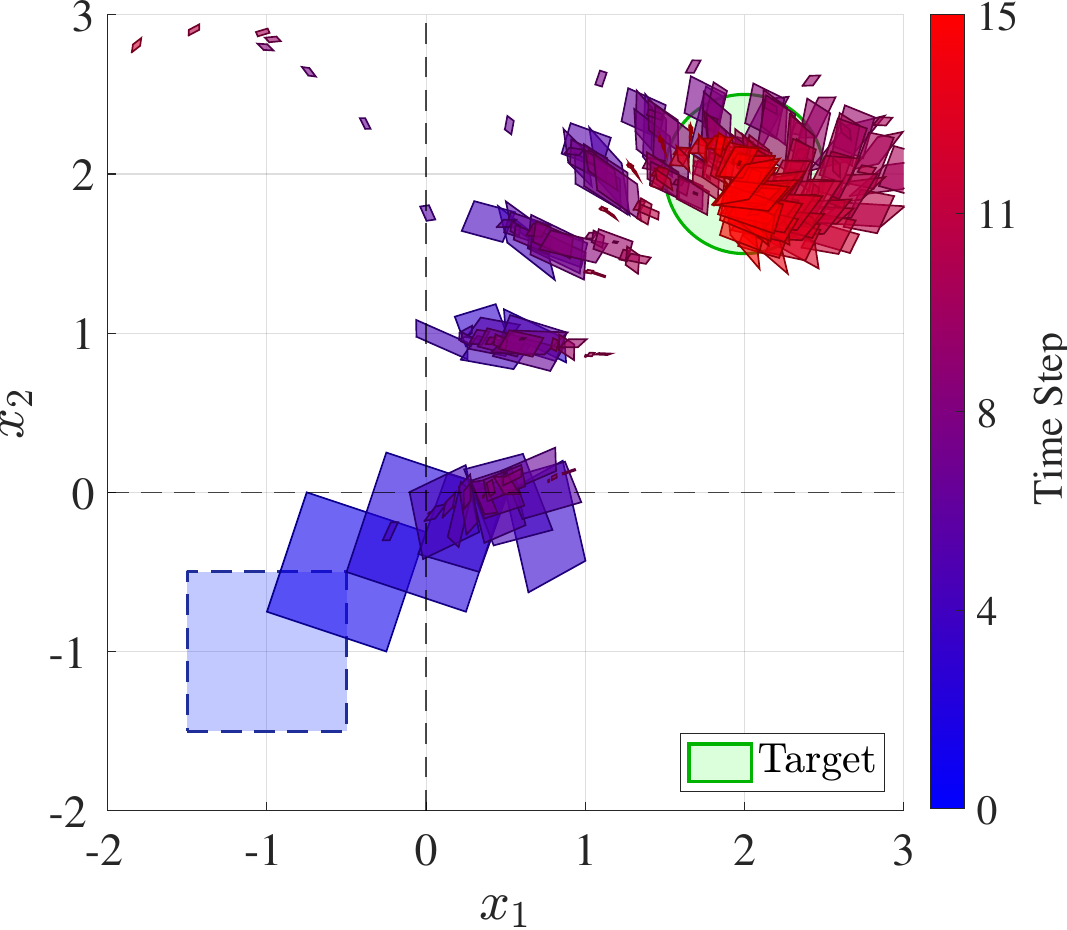}\hfill
    \includegraphics[width=0.33\textwidth]{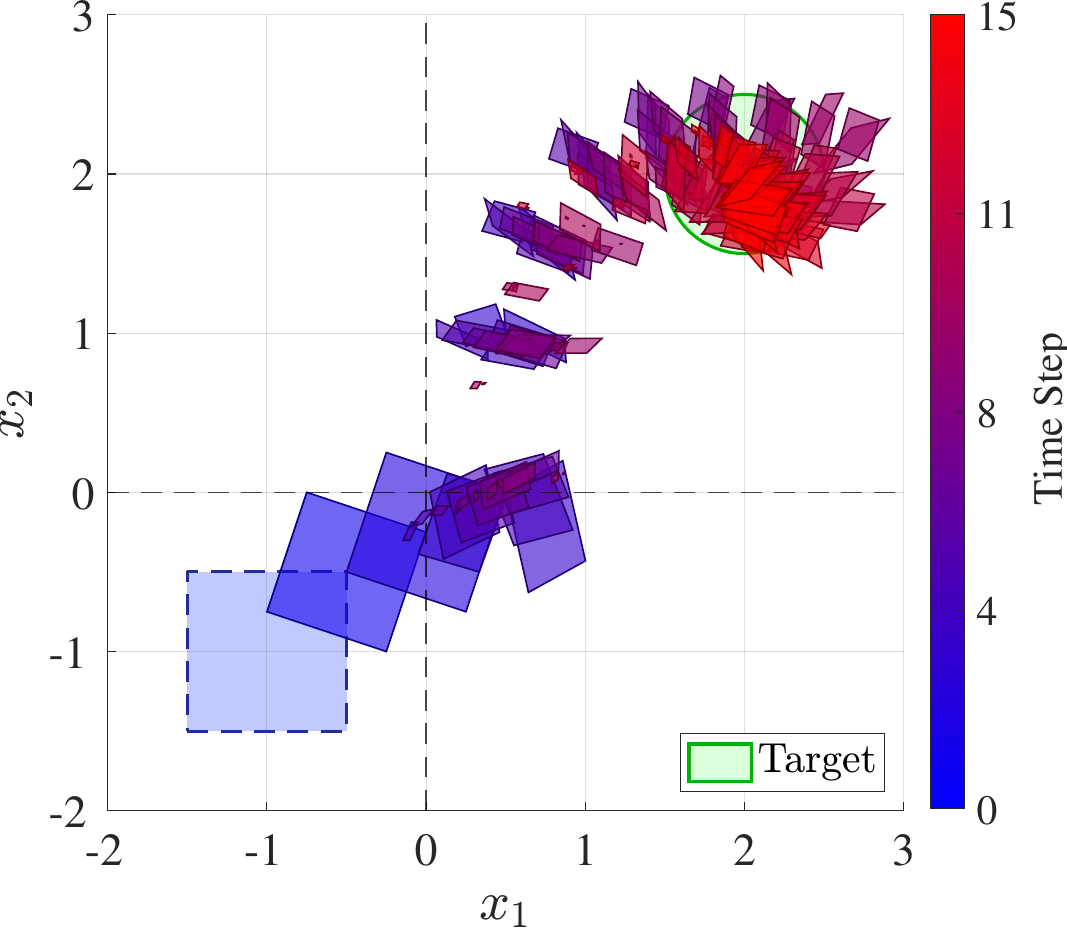}
    \caption{4Quad reachability comparison. Left: Set-actor. Center: Oracle baseline. Right: $V^*$-Opt baseline. The oracle applies $\pi^*(c_k)$ without generator awareness, while $V^*$-Opt uses scenario-based fmincon optimization. Color encodes time step.}
    \label{fig:reach_combined}
\end{figure*}

\subsubsection{Comparative Evaluation}

Table~\ref{tab:reach} compares the set-actor against an oracle-center ablation where $u_k = \pi^*(c_k)$ with $c_k$ the center of the enclosing zonotope and a $V^*$-Opt baseline that samples $K$ points from the zonotope and solves $u_k = \argmin_u \max_i V^*(f(x_i, u))$ via fmincon. On 4Quad, the oracle-center achieves $22/25$ fragments inside $\mathcal{T}$ but three fragments exit, one of which diverges to a distant quadrant with max viol. $= 3.81$. The $V^*$-Opt also fails, with $3$ of $18$ fragments exiting and max viol. $= 0.118$, because the scenario-based fmincon solver does not globally optimize over zonotope vertices, despite $300{\times}$ higher online cost.

Four ablation variants are evaluated on 4Quad: (i)~Without mode-boundary splitting, the enclosing area explodes by step~$3$. (ii)~Mean aggregation instead of top-$30\%$ converges slower but achieves comparable final containment. (iii)~Max aggregation produces noisier gradients. (iv)~Varying the top-$k$ threshold over $\{10\%, 20\%, 30\%, 50\%, 100\%\}$ yields $100\%$ containment for all values in $[20\%, 50\%]$, suggesting insensitivity to this hyperparameter.

\subsubsection{Certification Analysis}

Table~\ref{tab:stepwise} confirms positive certification margin on all four benchmarks. The margin $|\bar{V}_K| - c_{\mathcal{T}}$ decreases with trajectory length but remains positive at the certification step in every case.

\begin{table}[t]
    \centering
    \caption{Trajectory-certified convergence via Theorem~\ref{thm:traj_cert}. $K_{\mathrm{cert}}$: certification step. Margin $= |\bar{V}_K| - c_{\mathcal{T}} > 0$ certifies $\mathcal{U}(\mathbb{Z}_K) \subseteq \mathcal{T}$. Frags: fragment count at $K_{\mathrm{cert}}$.}
    \label{tab:stepwise}
    {\small\setlength{\tabcolsep}{3pt}
    \begin{tabular}{lcccccc}
        \toprule
        System     & $n$ & $K_{\mathrm{cert}}$ & Frags & $c_{\mathcal{T}}$ & $\bar{V}_{K}$ & Margin  \\
        \midrule
        4Quad      & 2   & 9                   & 12    & $8.50$            & $-8.80$       & $+0.30$ \\
        8Sec       & 2   & 7                   & 7     & $9.32$            & $-9.34$       & $+0.02$ \\
        CoupledOsc & 4   & 8                   & 300   & $2.91$            & $-3.18$       & $+0.27$ \\
        TripleOsc  & 6   & 9                   & 300   & $3.31$            & $-3.76$       & $+0.45$ \\
        \bottomrule
    \end{tabular}}
\end{table}

Center-trajectory descent holds universally: across $10$ 
random initializations, all $150/150$ center points reach 
$V^* < -c_{\mathcal{T}}$, consistent with the positive 
margins reported in Table~\ref{tab:stepwise}.

The expressivity--certifiability trade-off~\cite{jovanovic2023ibp} is quantified by the diagnostic condition~\eqref{eq:convergence_cond}. On 4Quad, global spectral-norm Lipschitz bounds give $L_\pi^{\mathrm{sp}} = 1{,}035$ and $L_V^{\mathrm{sp}} = 263{,}284$, where superscript $\mathrm{sp}$ denotes spectral-norm product bounds, while local zonotope propagation yields $L_\pi^{\mathrm{cert}} = 8.9$ and $L_V^{\mathrm{cert}} = 6.9$, reducing $L_V$ by $38{,}000{\times}$. With these global bounds, the RHS of condition~\eqref{eq:convergence_cond} evaluates to~$42$ against $c_{\mathcal{T}} = 8.5$, yielding a $5{\times}$ infeasibility gap. A spectrally normalized network reduces the gap but remains infeasible at $\mathrm{LHS}/\mathrm{RHS} = 0.02/2.3$. The empirically observed RHS is only~$0.9$, showing that the infeasibility is a Lipschitz over-estimation artifact rather than a true limitation. This gap is resolved locally: within a restricted terminal core, a spectrally constrained certificate network with $\|W_i\|_2 \leq 1.0$ achieves $\mathrm{LHS}/\mathrm{RHS} = 0.046/0.012$, yielding a feasible margin of $+0.035$ and demonstrating that sufficient value depth and tight certification can coexist.

Table~\ref{tab:local_cert} reports the certified bounds via zonotope propagation~\cite{singh2018fast}. 4Quad achieves $c_{\mathcal{T}}^\ell = 0.046$ at crossover diameter $\bar d_{\mathrm{cert}} \approx 0.135$. At step~$15$ the observed diameter of $0.045$ is well below this threshold. 8Sec achieves $c_{\mathcal{T}}^\ell = 2.67$ with positive margin even at $\diam = 0.20$. CoupledOsc, $n{=}4$, achieves $c_{\mathcal{T}}^\ell = 0.203$ with $\bar{d}_{\mathrm{cert}} = 3.63$. The low $L_B = 0.006$ makes the regret term negligible.

\begin{table}[t]
    \centering
    \caption{Certified value depth $c_{\mathcal{T}}^\ell$ and local condition~(\ref{eq:convergence_cond}) within $\mathcal{C}$, closing the terminal guarantee at all dimensions up to $n{=}6$. Unst.: number of unstable ReLU neurons in actor and critic. Margin $= c_{\mathcal{T}}^\ell - \mathrm{RHS}$, where positive values confirm Theorem~\ref{thm:local_cert}.}
    \label{tab:local_cert}
    {\small\setlength{\tabcolsep}{3pt}
        \begin{tabular}{clcccccc}
            \toprule
             & $\diam$ & $L_\pi^{\mathrm{cert}}$ & $L_V^{\mathrm{cert}}$ & Unst.    & $\epsilon^{\mathrm{cert}}$ & RHS               & Margin      \\
            \midrule
            \multirow{3}{*}{\rotatebox{90}{4Quad}}
             & $0.04$  & $0.98$               & $0.03$             & $0{+}0$  & $4{\cdot}10^{-4}$       & $0.012$           & $+0.035$    \\
             & $0.07$  & $0.98$               & $0.03$             & $0{+}0$  & $7{\cdot}10^{-4}$       & $0.021$           & $+0.026$    \\
             & $0.10$  & $0.98$               & $0.03$             & $0{+}1$  & $1{\cdot}10^{-3}$       & $0.032$           & $+0.015$    \\
            \midrule
            \multirow{4}{*}{\rotatebox{90}{8Sec}}
             & $0.04$  & $2.18$               & $0.16$             & $0{+}2$  & $0.006$                 & $0.12$            & $+2.56$     \\
             & $0.07$  & $2.17$               & $0.17$             & $0{+}7$  & $0.010$                 & $0.21$            & $+2.46$     \\
             & $0.10$  & $2.17$               & $0.19$             & $0{+}7$  & $0.016$                 & $0.32$            & $+2.35$     \\
             & $0.20$  & $2.15$               & $0.20$             & $0{+}15$ & $0.033$                 & $0.65$            & $+2.03$     \\
            \midrule
            \multirow{3}{*}{\rotatebox{90}{CplO}}
             & $0.04$  & $1.00$               & $1.00$             & $2{+}50$ & $1{\cdot}10^{-4}$       & $0.002$           & $+0.200$    \\
             & $0.10$  & $1.00$               & $1.00$             & $2{+}50$ & $3{\cdot}10^{-4}$       & $0.006$           & $+0.197$    \\
             & $0.30$  & $1.00$               & $1.00$             & $2{+}50$ & $9{\cdot}10^{-4}$       & $0.017$           & $+0.186$    \\
            \midrule
            \multirow{3}{*}{\rotatebox{90}{TrO}}
             & $0.04$  & $1.01$               & $0.81$             & $0{+}0$  & $2{\cdot}10^{-5}$       & $8{\cdot}10^{-4}$ & $+0.169$    \\
             & $0.10$  & $1.01$               & $0.81$             & $0{+}0$  & $4{\cdot}10^{-5}$       & $0.004$           & $+0.166$    \\
             & $0.30$  & $1.01$               & $0.81$             & $0{+}0$  & $1{\cdot}10^{-4}$       & $0.011$           & $+0.159$    \\
            \bottomrule
        \end{tabular}}
\end{table}

TripleOsc, $n{=}6$, demonstrates that the dimension barrier can be resolved by restricting the local certificate training domain $\mathcal{C}$ to a compact neighbourhood of $\Tref$, so that the same travel-cost reward~\eqref{eq:reward} generates effectively dense gradient signal. With global-domain training, a constrained $32$-unit network achieves $c_{\mathcal{T}}^\ell = 2{\times}10^{-6}$ because the 6D target volume is too small for effective gradient signal. A restricted-domain $64$-unit network with the same spectral constraint $\|W_i\|_2 \leq 1.0$ achieves $c_{\mathcal{T}}^\ell = 0.170$, an $85{,}000{\times}$ improvement, at crossover diameter $\bar d_{\mathrm{cert}} = 4.52$. The Lipschitz bounds increase only modestly from $L_\pi^{\mathrm{cert}} = 0.62$ to $1.01$ and $L_V^{\mathrm{cert}} = 0.27$ to $0.81$, indicating that value depth, not Lipschitz tightness, is the binding constraint at higher dimensions. All four benchmarks now achieve positive margin at all tested diameters.

\section{Conclusions}\label{sec:conclusions}

This paper presents a framework for certified set convergence in 
neural-controlled PWA systems, combining deep HJ reachability with 
a set-based controller trained via regret-based policy distillation. 
On four benchmarks up to $n{=}6$ with $M{=}8$ modes and $\|A_i\|_2$ 
up to $1.40$, the set-actor is the only method achieving full 
set containment among all tested baselines.

The three-level certification hierarchy, which comprises the diagnostic 
condition, the trajectory certificate, and the terminal certificate, 
operates at different scales: 
Propositions~\ref{prop:tradeoff}--\ref{prop:diagnostic} identify 
the expressivity--certifiability barrier and provide a diagnostic 
condition that motivates the design, Theorem~\ref{thm:traj_cert} 
certifies trajectory-specific set convergence without bounding 
$L_\pi$, and Theorem~\ref{thm:local_cert} closes the terminal 
guarantee using spectrally constrained local networks.

The main limitations are the CQLF assumption, which excludes systems lacking a common contraction metric, and the trajectory-specific nature of Theorem~\ref{thm:traj_cert}, which requires a separate verification run per initial condition. Extensions to piecewise quadratic Lyapunov functions~\cite{johansson1998piecewise} would relax the former, and amortized verification across initialization families is a direction for the latter. Data-driven formulations~\cite{xie2025datadriven} would further extend the framework to systems with unknown dynamics.

\bibliographystyle{IEEEtran}
\bibliography{references}

\end{document}